\newtheorem{theorem}{Theorem}
\newtheorem{corollary}[theorem]{Corollary}
\newtheorem{lemma}[theorem]{Lemma}
\newtheorem{conjecture}[theorem]{Conjecture}
\title{Greedy-reduction from Shortest Linear Superstring to Shortest Circular Superstring}
\author{Bastien Cazaux and Eric Rivals\thanks{This project has received funding from the European Union's Horizon 2020 research and innovation programme under the Marie Skłodowska-Curie grant agreement No 956229.}\\
  \multicolumn{1}{p{.7\textwidth}}{\vspace{0.1cm}\centering\emph{LIRMM, University of Montpellier, CNRS, Montpellier, France}}
}
\date{December 15, 2020}
\begin{document}

\maketitle

\begin{abstract}
  A superstring of a set of strings correspond to a string which contains all the other strings as substrings. The problem of finding the Shortest Linear Superstring is a well-know and well-studied problem in stringology. We present here a variant of this problem, the Shortest Circular Superstring problem where the sought superstring is a circular string. We show a strong link between these two problems and prove that the Shortest Circular Superstring problem is NP-complete. Moreover, we propose a new conjecture on the approximation ratio of the Shortest Circular Superstring problem.
\end{abstract}

\section{Notation}

\subsection{About String}
Let $\Sigma$ be a finite alphabet, then $\Sigma^\star$ denotes the free monoid over $\Sigma$.
For a linear string $w = a_1\ldots a_n$ over the alphabet $\Sigma$, $|w| = n$ is the \emph{length} of $w$, $w[i] = a_i$ is the $i^{th}$ \emph{character} of $w$, $w[i:j] = a_i \ldots a_j$ is the \emph{substring} from the position $i$ to the position $j$. A \emph{prefix} (respectively a \emph{suffix}) is a substring which begins in $1$ (resp. which ends in $n$).
A \emph{proper substring} is a substring which differs from the string. An \emph{overlap} from a linear string $x$ to a linear string $y$ is a proper suffix of $x$ that is also a proper prefix of $y$. We denote by $ov(x,y)$ the length of the longest overlap and by $x \odot y$ the \emph{merge} from $x$ to $y$, i.e. $x \odot y = x\; y[ov(x,y)+1:|y|]$.

For a circular string $w = \langle a_1 \ldots a_n \rangle$ over the alphabet $\Sigma$, $|w| = n$ is the \emph{length} of $w$ and a \emph{substring} of $c$ is a finite substring of the linear infinite string $(a_1 \ldots a_n)^{\infty}$ (which denotes the infinite concatenation of the linear string $a_1 \ldots a_n$).

\subsection{Greedy reduction}
We will exhibit a Strict-reduction~\cite{Crescenzi97}, which is one kind of approximation-preserving reduction between two optimization problems.

A \emph{Strict-reduction} from an optimization problem $\mathcal{A}$ to another optimization problem $\mathcal{B}$ is a pair of polynomial-time computable functions $(f,g)$ where:
\begin{itemize}
\item for each instance $x$ of $\mathcal{A}$, $f(x)$ is an instance of $\mathcal{B}$,
\item for each solution $y$ of $\mathcal{B}$, $g(y)$ is a solution of $\mathcal{A}$,
\item $R_{\mathcal{A}}(x,g(y)) \leq R_{\mathcal{B}}(f(x),y)$
  \\
  where $R_{\mathcal{D}}(x,y) = \max \big(\frac{c_{\mathcal{D}}(x,OPT(x))}{c_{\mathcal{D}}(x,y)}, \frac{c_{\mathcal{D}}(x,y)}{c_{\mathcal{D}}(x,OPT(x))}  \big)$ for any optimization problem $D$ where $c_{\mathcal{D}}$ is the cost function of $\mathcal{D}$.
\end{itemize}

We propose here a new type of reduction that links the greedy nature of the solutions of two optimization problems.  Indeed, for the Strict-reduction from an optimization problem $\mathcal{A}$ to another optimization problem $\mathcal{B}$, if we have an approximation ratio of $\alpha$ for the greedy algorithm for the problem $\mathcal{B}$, we know that there exists an algorithm of $\mathcal{A}$, which can be different of the greedy algorithm for $\mathcal{A}$, with an approximation ratio smaller than or equal to $\alpha$. We want that the reduction preserves the greedy nature of the solutions.

A \emph{Greedy-reduction} from an optimization problem $\mathcal{A}$ to another optimization problem $\mathcal{B}$ is a Strict-reduction $(f,g)$ where:
\begin{itemize}
\item for each greedy solution $y$ of $\mathcal{B}$, $g(y)$ is a greedy solution of $\mathcal{A}$.
\end{itemize}

As the notion of greedy algorithm for an optimization problem may be ambiguous, we define in Appendix (Section~\ref{se:appendix}) more formally the Greedy-reduction in the case of subset system maximization problems.


\section{Superstring problems: definition and main contributions}\label{sec:super}

Let $P$ be a set of linear strings.  A \emph{linear superstring} of $P$ is a linear string which has all strings of $P$ as substring.  A circular string $\langle w \rangle$ having all strings of $P$ as substrings is a \emph{circular superstring} of $P$.
Given a set $P$ of linear strings, the \emph{Shortest Linear Superstring problem} (or \emph{SLS}) corresponds to finding the linear superstring of $P$ of minimal length.  The \emph{Shortest Circular Superstring problem} (or \emph{SCS}) is defined as finding the shortest circular superstring of $P$.
For both minimization problems, there exists a corresponding associated maximization problem, where instead of minimizing  the \emph{superstring length} measure, one maximizes the \emph{compression measure}, i.e. one seeks a superstring  maximizing the difference between $\| P \| = \sum_{w \in P} |w|$ and the length of the sought superstring.

For both problems, a specific greedy algorithm can be defined.  For the Shortest Linear Superstring problem, the well-know greedy algorithm is defined as follows: for a set $P$ of strings, the greedy algorithm takes two strings of $P$ with the maximal overlap, remove these two elements from $P$, and insert their merge into $P$ and continue until only one string remains in $P$.  This string is a greedy solution for the Shortest Linear Superstring problem for $P$.  We call this solution a \emph{greedy linear superstring}. For the Shortest Circular Superstring problem, the greedy algorithm for SCS is identical to that for SLS except that at the end, it returns the merge of the greedy linear superstring with itself, which creates a circular string. This circular string is called a \emph{greedy circular superstring}.


\begin{theorem}\label{th:L:reduction}
  There exists a Greedy-reduction from the Shortest Linear Superstring problem (SLS) to the Shortest Circular Superstring problem (SCS) for both the length and compression measures.
\end{theorem}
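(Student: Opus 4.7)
The plan is to build the reduction around a single fresh sentinel. Let $\$$ be a symbol that does not occur in $\Sigma$, and set $f(P) := P \cup \{\$\}$. Given any circular superstring $y = \langle w \rangle$ of $f(P)$, the word $w$ must contain at least one occurrence of $\$$; define $g(y)$ to be the linear string obtained by rotating $w$ so that one such occurrence sits at the last position and then deleting that trailing character. Both $f$ and $g$ are clearly polynomial-time computable, and since no $p \in P$ contains $\$$, every occurrence of $p$ inside the circular $y$ lies entirely within one of the maximal $\$$-free arcs of $w$, all of which are preserved as substrings of $g(y)$; hence $g(y)$ is a valid linear superstring of $P$.

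From here I would verify the two cost identities. Clearly $|g(y)| = |y| - 1$, so $\|P\| - |g(y)| = \|f(P)\| - |y|$ and the compression measures of $y$ and of $g(y)$ coincide exactly. The length optima are linked by $|OPT_{SCS}(f(P))| = |OPT_{SLS}(P)| + 1$: closing an optimal linear superstring of $P$ into a circle with one trailing $\$$ gives the inequality $\leq$, while applying $g$ to an optimal circular superstring of $f(P)$ gives the reverse. The Strict-reduction bound then follows: on the compression side the equality of both costs and both optima makes the two ratios identical, while on the length side the elementary inequality $(a-1)/(b-1) \leq a/b$ for $a \geq b \geq 1$, applied with $a = |y|$ and $b = |OPT_{SCS}(f(P))|$, gives $R_{SLS}(P, g(y)) \leq R_{SCS}(f(P), y)$.

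The remaining clause is the greedy condition, which I expect to be the main obstacle: for every greedy circular superstring $y$ of $f(P)$, $g(y)$ must itself be a greedy linear superstring of $P$. The central structural fact is that $\$$ has overlap $0$ with every string over $\Sigma$; consequently, as long as the current multiset contains any pair with strictly positive overlap, the greedy algorithm on $f(P)$ is forced to mirror a legal step of the greedy on $P$, and the $\$$-element stays untouched. The plan is to couple the two greedy executions step by step, replicating every non-$\$$ merge verbatim and deferring the $\$$-merge until both executions simultaneously reach an all-zero-overlap state; at that moment the $\$$ is simply appended to one end of the partial linear superstring $v$ produced on the $P$-side, yielding an intermediate $v\$$ whose border is $0$. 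Its self-merge as a circular string is therefore $\langle v\$ \rangle$, so $g(\langle v\$ \rangle) = v$ is a greedy linear superstring of $P$ by construction.

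The subtle bookkeeping, and where the argument demands the most care, is to verify that attaching $\$$ to an intermediate string $s$ cannot create artificial new overlaps that would let the greedy on $f(P)$ escape onto a branch unavailable on the $P$-side. The risk is that $s\$$ admits $s$ itself as a length-$|s|$ proper prefix and thereby exposes a new in-overlap from any $t$ of which $s$ is a suffix. I would handle this by proving, by induction on the number of merges, that along any greedy execution of $f(P)$ the $\$$-merge is never chosen before the corresponding $P$-side execution has also exhausted its positive overlaps; once this is established, when the $\$$-merge does occur both executions sit in fully compatible all-zero-overlap states, and any $\$$-insertion order selected on the $f(P)$-side can be mirrored by a compatible tie-breaking on the $P$-side, completing the coupling.
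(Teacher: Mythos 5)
Your construction breaks down at the length-measure clause of the Strict-reduction, and the failure is not a repairable detail but a consequence of the additive (rather than multiplicative) relation your sentinel creates. You correctly get $|g(y)|=|y|-1$ and $|OPT_{SCS}(f(P))|=|OPT_{SLS}(P)|+1$, but the ``elementary inequality'' $(a-1)/(b-1)\leq a/b$ for $a\geq b\geq 1$ is backwards: cross-multiplying gives $b(a-1)\leq a(b-1)\iff a\leq b$, so for $a\geq b$ one has $(a-1)/(b-1)\geq a/b$, with strict inequality whenever $a>b$. Concretely, if $|OPT_{SLS}(P)|=10$ and $y$ is a circular superstring of $f(P)$ of length $22$, then $R_{SCS}(f(P),y)=22/11=2$ while $R_{SLS}(P,g(y))=21/10=2.1$, violating the requirement $R_{SLS}(P,g(y))\leq R_{SCS}(f(P),y)$. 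So with a single fresh sentinel the performance ratio for the length measure always gets worse (except at the optimum), and your $(f,g)$ is not a Strict-reduction for that measure; only your compression-measure identity is sound. This is exactly the point the paper's construction is designed around: it sets $f(P)=P\cup\overline{P}$ over a disjoint barred alphabet, so that $|c_o|=2|w_o|$ (multiplicative scaling), and defines $g$ to keep the \emph{shorter} of the $\Sigma$- and $\overline{\Sigma}$-restrictions of a canonical rotation, giving $2|g(c)|\leq |c|$ and hence $\frac{|g(c)|}{|w_o|}\leq\frac{|c|}{|c_o|}$ in the right direction.

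Secondarily, your greedy-coupling argument is left with an acknowledged hole: once $\$$ has been concatenated to an intermediate string $u$, a remaining string $v$ having $u$ as a suffix yields $ov(v,u\$)\geq|u|>0$ even though all pairwise overlaps were zero beforehand, so the greedy run on $f(P)$ can leave the ``concatenate-in-any-order'' regime and your mirroring claim needs the inductive argument you only sketch. The paper avoids this entirely because in $P\cup\overline{P}$ every overlap between a $\Sigma^{\ast}$-string and a $\overline{\Sigma}^{\ast}$-string is zero in both directions, and this property is preserved under all merges, so the greedy run splits cleanly into a greedy run on $P$ and one on $\overline{P}$ (Lemma~\ref{le:greedy}). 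To fix your proposal you would need both a multiplicative cost relation (not an additive one) and a sentinel mechanism inert under merges in both directions, which is essentially what the barred-copy construction provides.
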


As SLS is NP-complete~\cite{GallantMS80}, Theorem~\ref{th:L:reduction} implies that SCS also.

\begin{corollary}
  The Shortest Circular Superstring problem is NP-complete.
\end{corollary}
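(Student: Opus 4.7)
The plan is to combine Theorem~\ref{th:L:reduction} with the known NP-completeness of SLS~\cite{GallantMS80}. First, I would verify that SCS lies in NP: given an instance $P$, a candidate circular string $c$, and a target length $k$, we check that $|c| \leq k$ and that every $w \in P$ occurs as a substring of $c$ viewed circularly, which reduces to pattern matching of $w$ inside the linear string $c \cdot c[1:|w|-1]$. Each such check is polynomial in $\|P\|+|c|$, so membership in NP is immediate.

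For NP-hardness I would invoke Theorem~\ref{th:L:reduction}: the Greedy-reduction $(f,g)$ from SLS to SCS supplies a polynomial-time map $f$ from SLS instances to SCS instances. Because a Greedy-reduction is in particular a Strict-reduction, it preserves optimal solutions: whenever $R_{SCS}(f(P),y) = 1$, the inequality $R_{SLS}(P,g(y)) \leq R_{SCS}(f(P),y)$ forces $R_{SLS}(P,g(y)) = 1$, so $g(y)$ is optimal for SLS. To promote this into a Karp reduction between decision problems, I would unpack the construction of $f$ given in the proof of Theorem~\ref{th:L:reduction}: that construction yields an explicit, efficiently computable correspondence between the optimum cost for $P$ in SLS and the optimum cost for $f(P)$ in SCS --- most naturally an additive shift induced by appending a fresh sentinel character --- so the decision instance ``does $P$ admit a linear superstring of length at most $k$?'' reduces in polynomial time to ``does $f(P)$ admit a circular superstring of length at most $k'$?'' for the corresponding $k'$.

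The step most likely to feel like an obstacle is this last translation from an approximation-preserving reduction into a decision-version reduction, but it is routine once the explicit optimum-correspondence built into $f$ is spelled out. Combining NP membership with NP-hardness then completes the proof.
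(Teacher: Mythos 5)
Your overall route is the same as the paper's: the corollary is obtained by combining Theorem~\ref{th:L:reduction} with the NP-completeness of SLS, and your extra care about NP-membership and about turning the optimization reduction into a decision (Karp) reduction is appropriate --- indeed the paper dispatches the corollary in one sentence and leaves exactly those routine steps implicit. One correction, though: your guess about how $f$ realizes the optimum correspondence is wrong. The construction is not a ``fresh sentinel character'' giving an additive shift; $f(P) = P \cup \overline{P}$, where $\overline{P}$ is a copy of $P$ written over a disjoint barred alphabet, and the correspondence is multiplicative: Lemma~\ref{le:equal} shows $2|w_o| = |c_o|$, so the decision question ``does $P$ have a linear superstring of length at most $k$?'' maps to ``does $P \cup \overline{P}$ have a circular superstring of length at most $2k$?''. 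Both directions of that equivalence are exactly what the paper proves: from a linear superstring $w$ of $P$ one builds the circular superstring $\langle w\,\overline{w}\rangle$ of length $2|w|$ (no self-overlap since the alphabets are disjoint), and from any circular superstring $c$ of $P \cup \overline{P}$, Lemma~\ref{le:g} extracts a linear superstring $g(c)$ of $P$ of length at most $|c|/2$. Note also that the Strict-reduction ratio inequality alone, which you invoke first, would not suffice for the decision reduction --- it is precisely the exact identity of Lemma~\ref{le:equal} (not a consequence of strictness in general) that makes the Karp reduction go through, so your plan is sound only because you then defer to the explicit construction.
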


Another consequence of Theorem~\ref{th:L:reduction}:
A proof of a 2-approximation of the greedy algorithm for the Shortest Circular Superstring problem would imply a proof of the well-know greedy conjecture~\cite{BlumLTY94}, which states that the approximation ratio of the greedy algorithm for the Shortest Linear Superstring problem is $2$. Hence, we propose the following conjecture:

\begin{conjecture}
  The approximation ratio of the greedy algorithm for the Shortest Circular Superstring problem is $2$.
\end{conjecture}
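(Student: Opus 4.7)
The plan is to exploit the tight structural connection between greedy linear and greedy circular superstrings. By construction, the greedy circular superstring $G_C$ differs from the greedy linear superstring $G_L$ only by one additional self-merge, so $|G_C| = |G_L| - ov(G_L, G_L)$; symmetrically, any circular superstring of length $\ell$ can be cut at a single position and extended into a linear superstring of length at most $\ell + \max_{w \in P}|w|$, while any linear superstring is trivially a circular superstring of the same length. These bridges transport bounds between the two problems up to lower-order error and should be the backbone of any attack.

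Concretely, I would first fix an optimal circular superstring $O_C$ of length $L^*$ and represent it as a Hamiltonian cycle $C^*$ in the overlap graph of $P$ weighted by overlap length, so that $L^* = \|P\| - w(C^*)$. The target is then to charge the merges performed by the greedy algorithm against the edges of $C^*$ in such a way that the total overlap that greedy ``loses'' compared to $C^*$ is bounded by $L^*$; combined with the trivial bound $L^* \leq \|P\|$, this would yield $|G_C| \leq 2 L^*$. I would try to partition the greedy merge sequence into culprit arcs aligned with $C^*$ using the Monge-type triple-overlap inequalities that underlie the existing SLS analyses, and the hope is that the cyclic structure of $C^*$ removes the two dangling endpoints that complicate the analogous argument for SLS, which should allow a cleaner partition.

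The main obstacle is fundamental: by Theorem~\ref{th:L:reduction}, any such proof immediately yields a proof of the greedy conjecture~\cite{BlumLTY94} for SLS, which has resisted more than two decades of effort and for which the best known approximation ratio of the greedy algorithm is only $3.5$. So the statement is provably no easier than that longstanding open problem, and any honest proof strategy must, explicitly or implicitly, contain a new argument for the linear case. A much more realistic short-term milestone would be to transport the best known linear analysis through the reduction to certify a $3.5$-approximation for the circular greedy algorithm, which would at least confirm that the extra self-merge does not degrade the ratio compared to the linear version and would provide the first nontrivial approximation guarantee for SCS.
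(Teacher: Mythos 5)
The statement you were asked about is a \emph{conjecture}: the paper does not prove it, and indeed states it precisely because, by Theorem~\ref{th:L:reduction}, a proof that the greedy algorithm for SCS is a $2$-approximation would immediately yield the long-open greedy conjecture for SLS~\cite{BlumLTY94}. Your proposal correctly identifies this, and your sketch (encode an optimal circular superstring as a Hamiltonian cycle in the overlap graph and charge greedy merges against its edges via Monge-type inequalities) is a reasonable research plan, but it is not a proof and does not claim to be one; in that sense your assessment matches the paper's own status for this statement, and there is simply no argument on either side to compare.

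One direction of your fallback milestone is, however, not supported by the paper's reduction. Theorem~\ref{th:L:reduction} maps greedy circular superstrings of $P \cup \overline{P}$ to greedy linear superstrings of $P$ with $R_{\mathrm{SLS}} \leq R_{\mathrm{SCS}}$, so it transfers approximation \emph{guarantees} from SCS down to SLS and \emph{hardness} from SLS up to SCS; it cannot be used to ``transport the known $3.5$ linear analysis'' into a $3.5$ guarantee for the circular greedy algorithm. Obtaining any nontrivial ratio for greedy SCS would require either a greedy-reduction in the opposite direction (SCS to SLS) or a direct adaptation of the linear overlap arguments to the cyclic setting -- plausible, but an additional piece of work, not a corollary of the paper. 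A minor related point: your identity $|G_C| = |G_L| - ov(G_L,G_L)$ is correct in general, but on the instances $P \cup \overline{P}$ produced by the reduction the paper shows this self-overlap is $0$, so there the circular and linear greedy outputs have equal length, which is exactly what makes the greedy-preservation argument of Lemma~\ref{le:greedy} go through.
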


\section{Proof of Theorem~\ref{th:L:reduction}}

Given an ordered alphabet $\Sigma$, we take $\overline{\Sigma} = \{\overline{a} \; : \; a \in \Sigma\}$ where $\Sigma \cap \overline{\Sigma} = \emptyset$ and $\Sigma \cap \overline{\Sigma}$ is totally ordered.
Let $f$ be the function from $\mathcal{P}(\Sigma^{\star})$ to $\mathcal{P}\big((\Sigma \cup \overline{\Sigma})^{\star}\big)$ where for a set of strings $P$ over $\Sigma$, $f(P) = P \cup \overline{P}$ with $\overline{P} = \{\overline{w} : w \in P\}$ and $\overline{w} = \overline{a_1} \ldots \overline{a_k}$ for $w = a_1 \ldots a_k$.

For a circular superstring $c = \langle a_1, \ldots, a_k \rangle$ of $P \cup \overline{P}$, we denote by $l(c)$ the linear string corresponding to the smaller circular shift of $c$ in lexicographic order (for any $a\in \Sigma$ and $\overline{b} \in \overline{\Sigma}$, $a < \overline{b}$) and such that $l(c)[1] \in \Sigma$ and $l(c)[k] \in \overline{\Sigma}$. By construction, $l(c)$ exists and is unique.
For a linear string $w = a_1 \ldots a_q$ over $\Sigma'$ and $\Sigma'' \subseteq \Sigma'$, $\left.w\right|_{\Sigma''}$ is the restriction of $w$ to $\Sigma''$. We denote by $g$ the following application such that for a circular superstring $c$ of $P \cup \overline{P}$:
\[
  g(c) = a_1 \ldots a_k \text{ if } g'(c) \in \Sigma^{\ast} \text{ and } g'(c) = a_1 \ldots a_k \text{ or } \text{if } g'(c) \in \overline{\Sigma}^{\ast} \text{ and } g'(c) = \overline{a_1} \ldots \overline{a_k}
\]
with
\[
  g'(c) = \mathtt{Argmin}\Big\{|w| \; : \; w \in \{\left.l(c)\right|_{\Sigma},\left.l(c)\right|_{\overline{\Sigma}} \} \Big\}
\]

\begin{lemma}\label{le:g}
  For any circular superstring $c$ of $P \cup \overline{P}$, $g(c)$ is a linear superstring of $P$ of length smaller than or equal to $\frac{|c|}{2}$.
\end{lemma}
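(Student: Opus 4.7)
My plan is to establish two things separately: (i) $g(c)$ is a linear superstring of $P$, and (ii) $|g(c)|\le |c|/2$. Part (ii) is essentially a counting argument, so I would dispatch it first. Since $l(c)$ has length $|c|$ and its characters are partitioned between $\Sigma$ and $\overline{\Sigma}$, we have
\[
  |\left.l(c)\right|_{\Sigma}| + |\left.l(c)\right|_{\overline{\Sigma}}| \;=\; |l(c)| \;=\; |c|.
\]
By definition $g'(c)$ is the shorter of these two restrictions, so $|g'(c)|\le |c|/2$, and $|g(c)|=|g'(c)|$ by construction.

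For (i), the key structural fact I would isolate is that the chosen cut point used to define $l(c)$ avoids every occurrence of a string from $P \cup \overline{P}$. Concretely, since $l(c)[1]\in \Sigma$ and $l(c)[|l(c)|]\in \overline{\Sigma}$, the circular "seam" between position $|l(c)|$ and position $1$ lies between a $\overline{\Sigma}$-character and a $\Sigma$-character. Now any string $w\in P$ uses only characters in $\Sigma$, so every occurrence of $w$ in $c$ is a maximal run inside a block of $\Sigma$-characters and therefore cannot straddle this seam. Hence every $w\in P$ occurs already as a (linear) substring of $l(c)$. The symmetric statement holds for every $\overline{w}\in \overline{P}$ and $l(c)$.

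The next step is to argue that these occurrences survive the restriction. If $w\in P$ occurs at positions $i,\ldots,i+|w|-1$ of $l(c)$, all these positions carry $\Sigma$-characters, so the operation $\left.\cdot\right|_{\Sigma}$ removes nothing between them and preserves their consecutiveness: $w$ is therefore a substring of $\left.l(c)\right|_{\Sigma}$. Applying the analogous argument to $\overline{P}$ and to $\left.l(c)\right|_{\overline{\Sigma}}$, and then unbarring, shows that $\left.l(c)\right|_{\overline{\Sigma}}$ (with bars removed) is a linear superstring of $P$ as well. In either branch of the $\mathtt{Argmin}$, $g(c)$ is a linear superstring of $P$.

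The main obstacle to be careful about is the circular wrap-around: in a generic circular representation, substrings are allowed to cross the cut, which would break the restriction argument. The whole point of the definition of $l(c)$ is to place the cut at a $\overline{\Sigma}\!\to\!\Sigma$ transition so that no member of $P\cup\overline{P}$ crosses it. Once this is nailed down, the rest of the proof is routine; I would take care to note at the beginning that the assumption that $c$ is a superstring of $P\cup\overline{P}$ with $P\neq\emptyset$ guarantees the existence of at least one such transition, hence the well-definedness of $l(c)$ that the paper asserts by construction.
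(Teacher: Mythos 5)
Your proposal is correct and follows essentially the same route as the paper: the counting argument $|l(c)|_{\Sigma}| + |l(c)|_{\overline{\Sigma}}| = |c|$ for the factor $2$, and the observation that the cut of $l(c)$ sits at a $\overline{\Sigma}\to\Sigma$ transition so that occurrences of strings of $P$ (all-$\Sigma$) survive linearization and then restriction to $\Sigma$. You merely spell out the seam and well-definedness details that the paper leaves implicit, and treat the $\overline{\Sigma}$ branch explicitly where the paper invokes symmetry.
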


\begin{proof}[proof of Lemma~\ref{le:g}]
  Let $c = \langle a_1, \ldots, a_k \rangle$ be a circular superstring of $P \cup \overline{P}$.
  By definition, $g'(c) \in \{ \left.l(c)\right|_{\Sigma},\left.l(c)\right|_{\overline{\Sigma}}\}$. Assume, without loss of generality, that $g(c) = g'(c) = \left.l(c)\right|_{\Sigma}$, i.e. $|\left.l(c)\right|_{\Sigma}| \leq |\left.l(c)\right|_{\overline{\Sigma}}|$.

  As $l(c)[1] \in \Sigma$ and $l(c)[k] \in \overline{\Sigma}$, $l(c)$ is a linear superstring of $P \cup \overline{P}$ and thus $\left.l(c)\right|_{\Sigma}$ is a linear superstring of $P$. Indeed, as  $l(c)$ is a linear superstring of $P \cup \overline{P}$, for each string $s$ of $P$, there exists $i$ and $j$ such that $l(c)[i:j] = s$.
  As $l(c)[i:j] \in \Sigma^{\ast}$, there exists $i'$ and $j'$ such that $l(c)[i:j] = \left.l(c)\right|_{\Sigma}[i':j']$ and thus $s$ is a substring of $\left.l(c)\right|_{\Sigma}$. As $|l(c)| = |\left.l(c)\right|_{\Sigma}| + |\left.l(c)\right|_{\overline{\Sigma}}|$ , we have that $2|g(c)| \leq |l(c)| = |c|$.
\end{proof}

\begin{lemma}\label{le:equal}
  Let $w_o$ be a shortest linear superstring of $P$ and $c_o$ be a shortest circular superstring of $P \cup \overline{P}$.  One has $2|w_o| = |c_o|$.
\end{lemma}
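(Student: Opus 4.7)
The plan is to prove the equality by establishing the two inequalities $2|w_o| \leq |c_o|$ and $|c_o| \leq 2|w_o|$ separately. Neither step should present a serious obstacle, since Lemma~\ref{le:g} essentially does the hard work for one direction and the other direction just needs an explicit construction.

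First I would handle $2|w_o| \leq |c_o|$. By Lemma~\ref{le:g} applied to $c_o$, the string $g(c_o)$ is a linear superstring of $P$ of length at most $|c_o|/2$. Since $w_o$ is a \emph{shortest} linear superstring of $P$, we have $|w_o| \leq |g(c_o)| \leq |c_o|/2$, which rearranges to $2|w_o| \leq |c_o|$.

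Next I would prove $|c_o| \leq 2|w_o|$ by exhibiting a circular superstring of $P \cup \overline{P}$ of length exactly $2|w_o|$. The natural candidate is $c := \langle w_o\, \overline{w_o} \rangle$, the circular string obtained by concatenating $w_o$ and its barred copy. Clearly $|c| = 2|w_o|$. Since $w_o$ appears as a (linear) substring of $c$, every string of $P$ appears as a substring of $c$; similarly, $\overline{w_o}$ is a substring of $c$, so every string of $\overline{P}$ appears as a substring of $c$. Thus $c$ is a circular superstring of $P \cup \overline{P}$, and the minimality of $c_o$ gives $|c_o| \leq |c| = 2|w_o|$.

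Combining the two inequalities yields $2|w_o| = |c_o|$, which is the claim. The only subtle point is the first direction, where we rely entirely on Lemma~\ref{le:g}; the construction in the second direction is straightforward and uses nothing beyond the definitions of $\overline{w}$ and of a circular substring.
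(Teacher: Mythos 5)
Your proof is correct and follows essentially the same route as the paper: one inequality via Lemma~\ref{le:g} applied to $c_o$, the other via the explicit construction $\langle w_o\,\overline{w_o}\rangle$ and the minimality of $c_o$. The only difference is that the paper additionally verifies $ov(w_o\overline{w_o},w_o\overline{w_o})=0$ to pin the circular string's length at exactly $2|w_o|$, a step you omit but which is not needed under the paper's definition of the length of a circular string, since only the upper bound $|c_o|\leq 2|w_o|$ is required.
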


\begin{proof}[Proof of Lemma~\ref{le:equal}]
  Let $w$ be a linear superstring of $P$ of length $k$. We want to prove that there exists a circular superstring of $P \cup \overline{P}$ of length smaller than or equal to $2k$. We take $c = \langle w \overline{w} \rangle$. As $w$ is a linear superstring of $P$ and $\overline{w}$ is a linear superstring of $\overline{P}$, $c$ is a circular superstring of $P \cup \overline{P}$.
  By definition, $(w \overline{w})[1:k] \in \Sigma^{\ast}$ and $(w \overline{w})[k+1:2k] \in \overline{\Sigma}^{\ast}$,  thus one gets $ov(w \overline{w},w \overline{w}) = 0$.
  Indeed, assume $ov(w \overline{w},w \overline{w}) = l > 0$.

  By construction, we know that $\Sigma \cap \overline{\Sigma} = \emptyset$.
  If $l \leq k$ then $\overline{w}[k] = w[l]$, which is impossible because $\overline{w}[k] \in \overline{\Sigma}$ and $w[l] \in \Sigma$.
  If $l>k$ then $\overline{w}[1] = w[l-k+1]$, which is also impossible since $\overline{w}[1]  \in \overline{\Sigma}$ and $w[l-k+1] \in \Sigma$.
  As $ov(w \overline{w},w \overline{w}) = 0$, $|\langle w \overline{w} \rangle| = |w \overline{w}| =|w| + |\overline{w}| = 2k$.

  Let $c$ be a circular superstring of $P \cup \overline{P}$ of length $k'$. We want to prove that there exists a linear superstring of $P$ of length smaller than or equal to $\frac{k'}{2}$.
  We take $w = g(c)$. By Lemma~\ref{le:g}, $w$ is a linear superstring of $P$ of length smaller than or equal to $\frac{|c|}{2}$, i.e. $w \leq \frac{k'}{2}$.

  Now, we can prove that there exists a shortest linear superstring of $P$ of length $k$ if and only if there exists a shortest circular superstring of $P \cup \overline{P}$ of length $2k$. Let $w_o$ a shortest linear superstring of $P$ of length $k$, we know that there exists a circular superstring of $P \cup \overline{P}$ of length smaller than or equal to $2k$ and thus there exists a shortest circular superstring of $P \cup \overline{P}$ of length smaller than of equal to $2k$. Assume the length of a shortest circular superstring of $P \cup \overline{P}$ is strictly smaller than $2k$. Then by Lemma~\ref{le:g}, there exists a linear superstring of $P$ of length strictly smaller than $\frac{2k}{2} = k = |w_o|$, which contradicts the fact that $w_o$ is a shortest linear superstring of $P$, and concludes the proof.
\end{proof}

\begin{lemma}\label{le:ineq}
  Given $w_o$ a shortest linear superstring of $P$, $c_o$ a shortest circular superstring of $P \cup \overline{P}$, and $c$ a circular superstring of $P \cup \overline{P}$, one has
  \[
    \frac{|g(c)|}{|w_o|} \leq \frac{|c|}{|c_o|} \text{ and } \frac{\|P\|-|w_o|}{\|P\|-|g(c)|} \leq \frac{\|P \cup \overline{P}\|-|c_o|}{\|P \cup \overline{P}\|-|c|}.
  \]
\end{lemma}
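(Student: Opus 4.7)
The plan is to reduce both inequalities to the two previously established facts: Lemma~\ref{le:g}, which says $|g(c)| \leq |c|/2$ for any circular superstring $c$ of $P \cup \overline{P}$, and Lemma~\ref{le:equal}, which says $|c_o| = 2|w_o|$. I also need the elementary identity $\|P \cup \overline{P}\| = 2\|P\|$, which follows because $P$ and $\overline{P}$ are disjoint (they live over disjoint alphabets $\Sigma$ and $\overline{\Sigma}$) and the map $w \mapsto \overline{w}$ preserves length.

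For the length inequality, I would combine the two lemmas directly: $|g(c)|/|w_o| \leq (|c|/2)/(|c_o|/2) = |c|/|c_o|$, where the numerator is bounded using Lemma~\ref{le:g} and the denominator is rewritten using Lemma~\ref{le:equal}. This is immediate once both lemmas are in hand.

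For the compression inequality, I would first rewrite the right-hand side: substituting $\|P \cup \overline{P}\| = 2\|P\|$ and $|c_o| = 2|w_o|$ gives numerator $2(\|P\| - |w_o|)$ and denominator $2\|P\| - |c|$. Cross-multiplying with the left-hand side (both denominators are nonnegative, and strictly positive unless $P$ consists of a single string and $c$ is the trivial superstring, in which case the ratios are $1$ and the inequality is vacuous), the claim simplifies to $2\|P\| - |c| \leq 2(\|P\| - |g(c)|)$, i.e. $|c| \geq 2|g(c)|$, which is precisely Lemma~\ref{le:g}.

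There is no real obstacle here: the lemma is essentially a packaging of Lemmas~\ref{le:g} and~\ref{le:equal} into the ratio form required by the definition of a Strict-reduction. The only mildly delicate step is confirming the sign conditions needed to cross-multiply in the compression case, which is handled by noting that $|g(c)| \leq \|P\|$ and $|c| \leq \|P \cup \overline{P}\| = 2\|P\|$ since $g(c)$ and $c$ are superstrings of their respective instances, with equality only in degenerate cases where the compression ratios are trivially equal.
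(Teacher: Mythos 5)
Your proposal is correct and follows essentially the same route as the paper: both inequalities are obtained by combining $2|g(c)| \leq |c|$ (Lemma~\ref{le:g}), $2|w_o| = |c_o|$ (Lemma~\ref{le:equal}), and $\|P \cup \overline{P}\| = 2\|P\|$. Your extra remark on the sign conditions needed to cross-multiply in the compression case is a small point of care that the paper passes over silently, but it does not change the argument.
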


\begin{proof}[Proof of Lemma~\ref{le:ineq}]
  Let $w_o$ be a shortest linear superstring of $P$, $c_o$ a shortest circular superstring of $P \cup \overline{P}$ and $c$ a circular superstring of $P \cup \overline{P}$.

  By Lemma~\ref{le:g}, we have $2|g(c)| \leq |c|$ and by Lemma~\ref{le:equal}, $2|w_o| = |c_o|$ and thus $\frac{|g(c)|}{|w_o|}  = \frac{2|g(c)|}{2|w_o|} = \frac{2|g(c)|}{|c_o|} \leq \frac{|c|}{|c_o|}$.

  Moreover, because $\|P \cup \overline{P}\| = 2 \|P\|$, one gets
  \[ \frac{\|P\|-|w_o|}{\|P\|-|g(c)|} = \frac{2\|P\|-2|w_o|}{2\|P\|-2|g(c)|} \leq \frac{2\|P\|-2|w_o|}{2\|P\|-|c|} = \frac{2\|P\|-|c_o|}{2\|P\|-|c|} = \frac{\|P \cup \overline{P}\|-|c_o|}{\|P \cup \overline{P}\|-|c|}.\]
\end{proof}

Combining lemmas~\ref{le:g} and~\ref{le:ineq} gives us the Strict-reduction from SLS to SCS for both the length measure and the compression measure.

\begin{lemma}\label{le:greedy}
  Let $c$ be a greedy circular superstring of $P \cup \overline{P}$. The linear string $g(c)$ is a greedy linear superstring of $P$.
\end{lemma}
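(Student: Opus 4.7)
The plan is to follow the greedy execution on $P \cup \overline{P}$ side-by-side and show that the sub-execution on the $\Sigma$-side is itself a legitimate greedy execution on $P$, so that $g(c)$ can be read off as one of its outputs.

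First, I would use $\Sigma \cap \overline{\Sigma} = \emptyset$: any overlap between a string purely in $\Sigma^{\ast}$ and one purely in $\overline{\Sigma}^{\ast}$ would force a character of $\Sigma$ to coincide with one of $\overline{\Sigma}$, which is impossible. Therefore every positive-overlap merge that the greedy algorithm on $P \cup \overline{P}$ ever performs is intra-side, and the execution splits into a positive-overlap phase (intra-side merges only) followed by a zero-overlap phase. A brief case analysis on the alphabet type of the blocks at each boundary shows that zero-overlap merges cannot resurrect positive overlaps either, so Phase~2 is a sequence of free concatenations that produces a permutation of the Phase-1 residues $\{\sigma_1, \ldots, \sigma_p, \overline{\tau_1}, \ldots, \overline{\tau_q}\}$ with $\sigma_i \in \Sigma^{\ast}$ and $\overline{\tau_j} \in \overline{\Sigma}^{\ast}$.

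Next, I would isolate the $\Sigma$-side sub-execution. Since $\overline{\Sigma}$-side and cross merges leave every $\Sigma$-only string untouched, the $\Sigma$-side state at any moment depends only on the $\Sigma$-side merges performed so far. Whenever the greedy picks a pair on the $\Sigma$-side, the overlap of this pair equals the global maximum, which in turn equals the current $\Sigma$-side maximum (the global maximum being the larger of the two per-side maxima). Hence the ordered sequence of $\Sigma$-side merges forms a valid greedy execution on $P$ through the end of its own positive-overlap phase, and $\{\sigma_1, \ldots, \sigma_p\}$ is a legitimate Phase-1 residue set for greedy on $P$. By the symmetric argument on the $\overline{\Sigma}$-side together with the bar bijection, $\{\tau_1, \ldots, \tau_q\}$ is also a legitimate Phase-1 residue set for greedy on $P$.

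Third, I would unfold $g(c)$. Since $c = \langle s \rangle$, the circle inherits the cyclic block structure of the linear output $s$; the rotation $l(c)$ starts just after a $\overline{\Sigma}$-to-$\Sigma$ transition and ends just before the next such transition, so $\left.l(c)\right|_{\Sigma}$ is a concatenation of $\sigma_1, \ldots, \sigma_p$ in some cyclic order and $\left.l(c)\right|_{\overline{\Sigma}}$ is the analogous concatenation of the $\overline{\tau_j}$'s. Because the zero-overlap Phase~2 of greedy on $P$ can realize any permutation of its residues by free concatenation, whichever restriction achieves the argmin defining $g'(c)$, the resulting $g(c) \in \Sigma^{\ast}$ is a valid greedy linear superstring of $P$. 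The step I expect to be the main obstacle is the second one: the careful bookkeeping to show that no positive overlap resurfaces in Phase~2 and that the invariant ``global maximum equals $\Sigma$-side maximum whenever a $\Sigma$-side pair is chosen'' survives asymmetric tie-breaking between the two sides. Once this is established, identifying $\left.l(c)\right|_{\Sigma}$ with a valid Phase-2 permutation of $\{\sigma_i\}$ is essentially immediate.
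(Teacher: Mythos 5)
Your plan follows essentially the same route as the paper: positive-overlap merges are necessarily intra-side, so the greedy run on $P \cup \overline{P}$ splits into an intra-side phase followed by a zero-overlap concatenation phase; the $\Sigma$-side subsequence is a valid greedy run on $P$ (the paper glosses this, you make it explicit via ``chosen overlap $=$ global max $=$ side max''); and the cut defining $l(c)$ falls at a block boundary, so $\left.l(c)\right|_{\Sigma}$ is a concatenation of the $\Sigma$-residues, hence a greedy linear superstring of $P$ (the $\overline{\Sigma}$ case by the bar bijection). One step of your plan, however, would not go through as stated: the claim that a ``case analysis on the alphabet type of the blocks'' shows zero-overlap merges cannot resurrect positive overlaps. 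Zero pairwise overlaps alone do not prevent this: for the residues $\{ab,\; b,\; c\}$ all overlaps are $0$, yet after concatenating $b$ and $c$ one gets $ov(ab,bc)=1$; the reason such configurations cannot occur in a greedy run is not the alphabet structure but the classical invariant that greedy (on a substring-free input) keeps the current set substring-free and its chosen overlaps non-increasing, i.e. $ov(u_1,v_1)\geq ov(u_2,v_2)\geq \ldots \geq ov(u_{k-1},v_{k-1})$, which is exactly what the paper invokes to define the first zero-overlap step $j$ and to keep all later steps at overlap $0$. Relatedly, you take $c=\langle s\rangle$ for granted, whereas the paper first checks that the greedy linear superstring $s$ of $P\cup\overline{P}$ satisfies $ov(s,s)=0$, so that the final self-merge loses no characters and the block structure you exploit is indeed inherited by the circular string; this follows from the same zero cross-overlap and substring-freeness considerations, but it is a step that needs to be stated.
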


\begin{proof}[proof of Lemma~\ref{le:greedy}]
  Let $c$ be a greedy circular superstring of $P \cup \overline{P}$. As $c$ is a greedy circular superstring, there exists a greedy linear superstring $w_c$ of $P \cup \overline{P}$ such that $c$ is the merge of $w_c$ with itself. As for all strings $w \in P$ and $\overline{w} \in \overline{P}$, $ov(w,\overline{w}) = ov(\overline{w},w) = 0$, a greedy linear superstring $w$ of $P \cup \overline{P}$ has $ov(w,w) = 0$ and thus $\langle w_c \rangle = c$.
  By the definition of the greedy linear superstring, there exists $Q_1 = P \cup \overline{P}$, $Q_2$, \ldots, $Q_k = \{w_c\}$ such that $Q_i$ correspond to the $i^{th}$ recursion of the greedy algorithm for the Shortest Linear Superstring problem where $Q_{i+1} = Q_i \setminus \{u_i,v_i\} \cup \{u_i \odot v_i\}$ where $u_i$ and $v_i$ are the greedy choice at the step $i$.
  As the greedy choice takes the maximal overlap, we have $ov(u_1,v_1) \geq ov(u_2,v_2) \geq \ldots \geq ov(u_{k-1},v_{k-1})$. As $w \in P$ and $\overline{w} \in \overline{P}$, $ov(w,\overline{w}) = ov(\overline{w},w) = 0$, the set $\{i \; : \; ov(u_i,v_i) = 0\}$ is not empty and we take $j = \min(\{i \; : \; ov(u_i,v_i) = 0\})$.
  By construction, any string of $P \cup \overline{P}$ is substring of a string of $Q_j$ and each string of $Q_j$ is either in $\Sigma^{\ast}$ or in $\overline{\Sigma}^{\ast}$.
  As for all strings $w \in P$ and $\overline{w} \in \overline{P}$, $ov(w,\overline{w}) = ov(\overline{w},w) = 0$, any concatenation of strings of $Q_j \cap \Sigma^{\ast}$ is a greedy linear superstring of $P$, and similarly, any concatenation of strings of $Q_j \cap \overline{\Sigma}^{\ast}$ is a greedy linear superstring of $\overline{P}$. Hence, $g(c)$ is a greedy linear superstring of $P$.
\end{proof}

Lemma~\ref{le:greedy} shows that for any greedy circular superstring $c$ of $P \cup \overline{P}$, $g(c)$ is a greedy linear superstring of $P$. This concludes the proof of Theorem~\ref{th:L:reduction}.

\section{Appendix}\label{se:appendix}

\subsection{Greedy-reduction for subset system maximization problems}

We introduced the notion of Greedy reduction. To make it a useful concept, it is crucial to clarify what is a greedy algorithm. Especially for SLS, the greedy algorithm can be written as described above (see Section~\ref{sec:super} and \cite{GallantMS80}) or as the greedy algorithm from a specific subset system \cite{CazauxR16}
Here, we propose a definition of Greedy reduction for a subset system of maximization problems, for which the greedy algorithm is unambiguously defined.

For a finite set $E$, a subset system $\mathcal{L}$ is a set of subsets of $E$ satisfying two conditions: first, $\emptyset \in \mathcal{L}$, and second, if $B \in \mathcal{L}$ and $A \subseteq B$ then $A \in \mathcal{L}$. We denote by $\max(\mathcal{L})$ the set of elements of $\mathcal{L}$ that is maximal for inclusion.

A maximization problem is called a \emph{subset system maximization problem} if every instance of this problem defines a subset system~\cite{Mestre06}. An instance of a subset system maximization problem is thus a triplet $(E,\mathcal{L},w)$ where $E$ is a finite set, $\mathcal{L}$ is a subset system, and
 $w$ is a function that assigns a weight to each element of $E$.

An optimal solution of this problem for this instance is an element of $\mathcal{L}$ with the maximum weight, i.e. $\mathtt{Argmax}_{F\in \mathcal{L}}\big(w(F)\big)$ where $w(F) = \sum_{x \in F} w(x)$. For a given instance $(E,\mathcal{L},w)$, one can also uniquely define a greedy algorithm (see Algorithm~\ref{alg:glouton:sh}).

\begin{algorithm}[htbp]
  \input{ $(E,\mathcal{L},w)$}
  The elements $e_i$ of $E$ sorted by decreasing weight: $w(e_1) \geq w(e_2) \geq \ldots \geq w(e_n)$

  $F \leftarrow \emptyset$

  \For{$i=1$ to $n$} {
    \lIf{$F \cup \{e_i\} \in \mathcal{L}$}	{$F \leftarrow F \cup \{e_i\}$}
  }
  \Return $F$

  \output{A set $F$ of $\max(\mathcal{L})$.}

  \caption{The greedy algorithm associated with an instance $(E,\mathcal{L},w)$ of a subset system maximization problem.\label{alg:glouton:sh}}
\end{algorithm}

A Greedy-reduction from a subset system maximization problem $\mathcal{A}$ to another subset system maximization problem $\mathcal{B}$ is a pair of polynomial-time computable functions $(f,g)$ where:
\begin{itemize}
\item for each instance $(E,\mathcal{L},w)$ of $\mathcal{A}$, there exists $\mathcal{L}'$ and $w'$ such that $(f(E),\mathcal{L}',w')$ is an instance of $\mathcal{B}$,
\item for each element $y$ of $\max(\mathcal{L}')$, $g(y)$ is an element of $\max(\mathcal{L})$,
\item for each greedy solution $y$ of $(f(E),\mathcal{L}',w')$, $g(y)$ is a greedy solution of $(E,\mathcal{L},w)$,
\item $ \frac{w(g(y))}{\max_{F\in \mathcal{L}}\big( w(F)\big)} \leq  \frac{w'(y)}{\max_{F\in \mathcal{L}'}\big( w'(F)\big)}$.
\end{itemize}

Now, we can reuse the subset systems of~\cite{CazauxR16} to define the greedy algorithm for the Shortest Linear Superstring problem and for the Shortest Circular Superstring problem.

For a set of strings $P$, we denote $E_P$ the set of all pairs of $P$, i.e. $E_P = P \times P$.
We define the following subset system:
\begin{itemize}
\item $(E_P, \mathcal{L} = \{F : F \text{ satisfies (L1), (L2) and (L3)}\})$ for the Shortest Linear Superstring problem,
\item $(E_P, \mathcal{C} = \{F : F \text{ satisfies (L1), (L2) and (L3b)}\})$ for the Shortest Circular Superstring problem.
\end{itemize}
where
\begin{itemize}
\item[(L1)] $\forall s_i, \ s_j$ and $s_k \in P$, $(s_i,s_k)$ and $(s_j,s_k) \in F \Rightarrow i=j$,
\item[(L2)] $\forall s_i, \ s_j$ and $s_k \in P$, $(s_k,s_i)$ and $(s_k,s_j) \in F \Rightarrow i=j$,
\item[(L3)] for any $r \in \{1,\ldots,|P|\}$, there exists no cycle $\big((s_{i_1},s_{i_2}),\; \ldots,\; (s_{i_{r-1}},s_{i_r}),\; (s_{i_r},s_{i_1})\big)$ in $F$.
\item[(L3b)] for any $r \in \{1,\ldots,|P|-1\}$, there exists no cycle $\big((s_{i_1},s_{i_2}),\; \ldots,\; (s_{i_{r-1}},s_{i_r}),\; (s_{i_r},s_{i_1})\big)$ in $F$.
\end{itemize}

Unlike in (L3), the condition (L3b) allows for a single cycle that contains all the elements (but disallows any other cycle).

\bibliographystyle{plainurl} 
\bibliography{scs}

\end{document}